\newtheorem{theorem}{Theorem}[section]
\newtheorem{lemma}[theorem]{Lemma}
\newtheorem{proposition}[theorem]{Proposition}
\newtheorem{remark}[theorem]{Remark}
\theoremstyle{definition}
\newcommand{\field}[1]{\mathbb{#1}}
\newcommand{\bR}{\field{R}}
\newcommand{\bN}{\field{N}}
\newtheorem*{theorem*}{Theorem}
\newtheorem*{corollary*}{Corollary}
\newcommand{\beqa}{\begin{eqnarray*}}
	\newcommand{\eeqa}{\end{eqnarray*}}
\def\la{\lambda}
\def\cS{\mathcal{S}}
\def\cI{\mathcal{I}}
\def\rd{\bR^d}
\def\rdd{{\mathbb{R}^{2d}}}
\def\R{\right)}
\def\<{\left<}
\def\>{\right>}
\def\mv1{M_v^1}
\def\phas{(x,\omega )}
\def\o{\omega}
\def\R{\mathbb{R}}
\def\Ren{\mathbb{R}^d}
\def\Renn{\rdd}
\def\sch{\mathcal{S}}
\def\Fur{\mathcal{F}}
\def\Sn2{S_{2}(L^{2}(\Ren))}
\def\S1{S_{1}(L^{2}(\Ren))}
\def\sig00{\sigma_{0,0}}
\def\la{\langle}
\def\ra{\rangle}
\newcommand{\GLL}{\mathrm{GL}\left(2d,\mathbb{R}\right)}
\begin{document}
\begin{abstract}
We study the convergence in $L^2$ of the time slicing approximation of Feynman path integrals under low regularity assumptions on the potential. Inspired by the custom in Physics and Chemistry, the approximate propagators considered here arise from a series expansion of the action. The results are ultimately based on function spaces, tools and strategies which are typical of Harmonic and Time-frequency analysis. 
\end{abstract}

\title[Approximation of Feynman path integrals]{Approximation of Feynman path integrals with non-smooth potentials}

\author{Fabio Nicola and S. Ivan Trapasso}
\address{Dipartimento di Scienze Matematiche,
Politecnico di Torino, corso Duca degli Abruzzi 24, 10129 Torino,
Italy}
\address{Dipartimento di Scienze Matematiche,
	Politecnico di Torino, corso Duca degli Abruzzi 24, 10129 Torino,
	Italy}
\email{fabio.nicola@polito.it}
\email{salvatore.trapasso@polito.it}
\subjclass[2010]{81Q30, 81S30, 35S30, 42B20.}
\keywords{Feynman path integrals, time slicing approximation, modulation spaces, oscillatory integral operators, Schr\"odinger equation, short-time action, short-time propagator.}
\maketitle

\section{Introduction}
Consider the Schr\"odinger equation
\begin{equation}\label{equazione}
i\hbar \partial_t u=-\frac{1}{2}\hbar^2\Delta u+V(t,x)u
\end{equation}
where $0<\hbar\leq 1$ and the potential $V(t,x)$, $(t,x)\in\R\times \rd$, is a real-valued function. Feynman's groundbreaking intuition consisted in recasting the corresponding propagator as a formal integral on an infinite dimensional space of paths in configuration space (see \cite{feynman,feynman-higgs}). Since then, several approaches have been proposed to make rigorous this insight: we do not attempt to reconstruct the enormous literature on this topic, but we refer to the books \cite{albeverio,deGosson0,mazzucchi,reed,schulman} and the references therein.\par
The starting point is the formula for the propagator for the free particle ($V\equiv0$): 
\[
U(t,s)f(x)= \frac{1}{(2\pi i (t-s) \hbar)^{d/2}} \int_{\rd} e^{\tfrac{i}{\hbar}\frac{|x-y|^2}{2(t-s)}} f(y)\, dy.
\]
One can notice that the phase in this integral coincides with the corresponding classical action $S(t,s,x,y)$, which is defined in general as follows. First of all, one introduces the action along a path $\gamma$ in $\rd$ by the formula
\[
S[\gamma]=\int_s^t \mathcal{L}(\gamma(\tau),\dot{\gamma}(\tau),\tau)\,d\tau,
\]
$\mathcal{L}$ being the Lagrangian of the corresponding classical system.
Suppose now that for $t-s$ small enough there is only one classical path $\gamma$ (i.e.\ a path satisfying the Euler-Lagrange equations) such that the boundary condition $\gamma(s)=y$, $\gamma(t)=x$ hold. One then defines the action $S(t,s,x,y)$ along that path (alias generating function) as
\begin{equation}\label{azione}
S(t,s,x,y)=\int_s^t \mathcal{L}(\gamma(\tau),\dot{\gamma}(\tau),\tau)\,d\tau.
\end{equation}
When $\mathcal{L}(x,v)=|v|^2/2$ we retrieve $S(t,s,x,y)=|x-y|^2/(2(t-s))$. \par
For a wide class of {\it smooth} potentials with at most quadratic growth at infinity  it was shown in \cite{fujiwara1,fujiwara2} (see also \cite{fujiwara4,fujiwara5,fujiwara6,kumanogo0,kumanogo1,kumanogo2,kumanogo3,kumanogo4,kumanogo5,kumanogo6,yajima}) that the propagator
$U(t,s)$ for $t-s\not=0$ small enough can be represented as an oscillatory integral operator (OIO), namely
\begin{equation}\label{uzero}
U(t,s)f(x)=\frac{1}{(2\pi i (t-s) \hbar)^{d/2}} \int_{\rd} e^{\tfrac{i}{\hbar}S(t,s,x,y)}b(\hbar,t,s,x,y) f(y)\, dy
\end{equation}
for a suitable amplitude function $b(\hbar,t,s,x,y)$. \par
In concrete situations, except for a few cases, there is no hope to obtain the exact propagator in an explicit, closed form. Therefore, it is a common practice in Physics to consider approximate propagators (parametrices)
 $E^{(N)}(t,s)$ defined by
\begin{equation}\label{ezero}
E^{(N)}(t,s)f(x)=\frac{1}{(2\pi i (t-s) \hbar)^{d/2}} \int_{\rd} e^{\tfrac{i}{\hbar}S^{(N)}(t,s,x,y)} f(y)\, dy,
\end{equation}
where \begin{equation}\label{sn}
S^{(N)}(t,s,x,y)=\frac{|x-y|^2}{2(t-s)}+\sum_{k=1}^N W_k(x,y)(t-s)^k
\end{equation}
is essentially a modified $N$-th order Taylor expansion of the action $S$ at $t=s$; see Section 2 below for the precise construction of the functions $W_k$. \par 
It is clear that the operators $E^{(N)}(t,s)$ do no longer satisfy the evolution property $U(t,s)=U(t,\tau) U(\tau,s)$. The spirit of the time slicing approach can be condensed as follows (see the monograph \cite{fujiwarabook} for a comprehensive treatment instead): for any subdivision $\Omega:s=t_0<t_1<\ldots<t_L=t$ of the interval $[s,t]$, consider the composition
\begin{equation}\label{zero0}
E^{(N)}(\Omega,t,s)=E^{(N)}(t,t_{L-1}) E^{(N)}(t_{L-1},t_{L-2})\ldots E^{(N)}(t_1,s),
\end{equation}
which has integral kernel
\begin{multline}\label{zero}
K^{(N)}(\Omega,t,s,x,y)
\\
=\prod_{j=1}^L \frac{1}{(2\pi i(t_j-t_{j-1})\hbar)^{d/2}}\int_{\R^{d(L-1)}} \exp\Big(\tfrac{i}{\hbar}\sum_{j=1}^L S^{(N)}(t_j,t_{j-1},x_j,x_{j-1})\Big) \prod_{j=1}^{L-1} dx_j,
\end{multline}
with $x=x_{L}$ and $y=x_{0}$.\par
It is reasonable to believe that the operators $E^{(N)}(\Omega,t,s)$ converge to the actual propagator as $\omega(\Omega):=\sup\{t_j-t_{j-1}: j=1,\ldots,L\}\to 0$, in line with Feynman's insight. In order to keep the technicalities at minimum, in this note we confine our investigation to the space of bounded operators in $L^2(\rd)$, endowed with the usual operator norm. Nevertheless, this basic framework leaves room for considering potentials characterized by mild regularity assumptions. A suitable reservoir of such functions is the so-called Sj\"ostrand's class, which can be provisionally defined as the space of tempered distributions $\sigma\in\cS'(\rd)$ such that
$$ \int_{\rd} \sup_{x\in\rd} |\langle \sigma, M_{\omega}T_x g\rangle | d \omega<\infty,
$$ for any non-zero Schwartz function $g\in\cS(\rd)\setminus \{0\}$, where $M_{\omega}$ and $T_{x}$, $x,\omega \in \rd$ respectively denote the modulation and translation operators:
$$M_{\omega}f(t)=e^{2\pi i \omega\cdot t}f(t), \qquad T_x f(t) = f(t-x).$$
In particular, the following condition precisely defines the functions we are interested in. \par\medskip
{\bf Assumption (A)} 
{\it $V(t,x)$ is a real-valued function of $(t,x)\in\R\times \rd$ and there exists $N\in \bN$, $N\ge 1$, such that\footnote{We denote by $C^0_b(\R,X)$ the space of continuous and bounded functions $\R\to X$.}
\begin{equation}\label{assumV}
\partial_t^k \partial_x^\alpha V \in C^0_b(\bR,M^{\infty,1}(\rd)),
\end{equation}
for any $k\in\bN$ and $\alpha\in \bN^d$ satisfying $$2k+|\alpha|\leq 2N.$$}
As a rule of thumb, a function in $M^{\infty,1}(\rd)$ is bounded on $\rd$ and locally enjoys the mild regularity of the Fourier transform of an $L^1$ function. This family of functions is largely known in the context of both pseudodifferential operators and phase space analysis - see for instance \cite{grosj} and the references therein.  Although it may seem an exotic symbols class at a first glance, the connections with other function spaces are manifold: for instance, it contains smooth functions all of whose derivatives are bounded and, more generally, any function whose Fourier transform is a finite complex measure. It is worth mentioning that the latter class of potentials has been investigated in relation to path integrals by Albeverio \cite{albeverio} and It\^{o} \cite{ito}. 
Actually, the Sj\"ostrand's class reveals to coincide with a special type of modulation space. In general terms, modulation spaces are Banach spaces defined by controlling the time-frequency concentration and the decay of its elements - see the subsequent section for the details. They were introduced by Feichtinger in the '80s (cf. the pioneering papers \cite{Segal81.Feichtinger_1981_Banach,feichtinger1983modulation}) and they were soon recognized as the optimal environment for Time-frequency Analysis. They also provide a fruitful context to set problems in Harmonic Analysis and PDEs. In particular, several studies on the Schr\"odinger's equation have been conducted from this perspective insofar: among others we mention \cite{CGNR,CNR rough,CN rough} and the references therein. 

\par\smallskip
The aforementioned results motivate in a natural way the problems we consider in this note. We now state our main result. 
\begin{theorem}\label{mainteo}
Assume the condition in Assumption {\rm (A)}. 
For every $T>0$, there exists a constant $C=C(T)>0$ such that, for $0<t-s\leq T\hbar $, $0 < \hbar \le 1$, and any subdivision $\Omega$ of the interval $[s,t]$, we have 
\begin{equation}\label{stimadapro}
\|E^{(N)}(\Omega,t,s)-U(t,s)\|_{L^2\to L^2}
\leq C\omega(\Omega)^{N}.
\end{equation}
\end{theorem}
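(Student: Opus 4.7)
My first step is a standard telescoping identity that reduces the global estimate to a single-step analysis. Writing
\begin{equation*}
E^{(N)}(\Omega,t,s)-U(t,s)=\sum_{j=1}^{L}\cA_j\bigl[E^{(N)}(t_j,t_{j-1})-U(t_j,t_{j-1})\bigr]U(t_{j-1},s),
\end{equation*}
with $\cA_j:=E^{(N)}(t,t_{L-1})\cdots E^{(N)}(t_{j+1},t_j)$ and $\cA_L:=\Id$, and noting that $U(\tau',\tau)$ is unitary on $L^2$, the proof reduces to two quantitative ingredients: (i) a uniform single-step bound
\begin{equation*}
\|E^{(N)}(\tau',\tau)\|_{L^2\to L^2}\leq 1+C(\tau'-\tau),\qquad 0<\tau'-\tau\leq T\hbar,
\end{equation*}
which, being stable under products, gives $\|\cA_j\|_{L^2\to L^2}\leq e^{C(t-s)}$; and (ii) the short-time consistency estimate
\begin{equation*}
\|E^{(N)}(\tau',\tau)-U(\tau',\tau)\|_{L^2\to L^2}\leq C(\tau'-\tau)^{N+1}.
\end{equation*}
Combining (i)--(ii) with $\sum_{j=1}^{L}(t_j-t_{j-1})^{N+1}\leq \omega(\Omega)^{N}(t-s)\leq T\hbar\,\omega(\Omega)^{N}$ yields \eqref{stimadapro}.

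\textbf{Single-step $L^2$-boundedness via Sj\"ostrand's class.} For (i), I would write $E^{(N)}(\tau',\tau)=U_{0}(\tau',\tau)+[E^{(N)}(\tau',\tau)-U_{0}(\tau',\tau)]$, where $U_0$ is the free propagator. The integral kernel of the second summand equals the free-propagator kernel times $e^{\frac{i}{\hbar}\sum_{k=1}^N W_k(x,y)(\tau'-\tau)^k}-1$. Composing on the right with the unitary $U_0(\tau,\tau')$ turns this operator into a pseudodifferential operator whose symbol is built from the functions $W_k$. Under Assumption (A), the $W_k$ are polynomial expressions in $\partial_t^{j}\partial_x^{\alpha}V$ (constructed in Section~2), so they lie in $M^{\infty,1}$; since $M^{\infty,1}$ is a Banach algebra stable under the action of smooth functions vanishing at $0$, the resulting symbol belongs to $M^{\infty,1}(\rdd)$ with norm $O(\tau'-\tau)$, uniformly in $0<\hbar\leq 1$ thanks to the restriction $\tau'-\tau\leq T\hbar$. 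Sj\"ostrand's $L^2$-boundedness theorem then yields (i).

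\textbf{Consistency estimate and main obstacle.} For (ii), I would evaluate the discrepancy in the Schr\"odinger equation:
\begin{equation*}
(i\hbar\partial_{\tau'}+\tfrac{1}{2}\hbar^2\Delta_x-V(\tau',x))\,E^{(N)}(\tau',\tau)=R^{(N)}(\tau',\tau),
\end{equation*}
where, by direct differentiation of \eqref{ezero} and the Hamilton--Jacobi-type vanishing built into the choice \eqref{sn} of $S^{(N)}$, the remainder $R^{(N)}$ is an OIO whose amplitude vanishes to order $N$ in $\tau'-\tau$. Duhamel's formula together with the $L^2$-unitarity of $U(\tau',\sigma)$ reduces (ii) to
\begin{equation*}
\|R^{(N)}(\sigma,\tau)\|_{L^2\to L^2}\leq C\hbar\,(\sigma-\tau)^{N}.
\end{equation*}
This is the heart of the argument and the principal technical obstacle. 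The amplitude of $R^{(N)}$ is a polynomial expression in $V$, the derivatives $\partial_t^{k}\partial_x^{\alpha}V$ with $2k+|\alpha|\leq 2N$, and the functions $W_k$; one must show that, after the change of variables that flattens the quadratic part of the phase, this amplitude produces a symbol in $M^{\infty,1}(\rdd)$ with the correct time scaling. I would handle this by exploiting the Banach-algebra structure of $M^{\infty,1}$ to absorb successive factors of $V$ and its derivatives, combined with Gabor-frame almost-diagonalization of the OIO kernel to reduce its boundedness to Sj\"ostrand's theorem. Once this symbolic estimate is established, the telescoping machinery of the first paragraph closes the proof.
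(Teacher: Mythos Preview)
Your overall scheme---telescoping plus a single-step consistency bound via Duhamel---is precisely the paper's route: the paper packages the telescoping as a Fujiwara-type product expansion (Theorem~\ref{absres}) and proves the single-step bound as Theorem~\ref{stimaN+1}. Where your proposal slips is the $\hbar$-accounting in step~(ii), and this is a genuine (though easily repaired) gap.

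You reduce (ii) to $\|R^{(N)}(\sigma,\tau)\|_{L^2\to L^2}\leq C\hbar(\sigma-\tau)^N$, but that inequality is false. Computing the amplitude of $(i\hbar\partial_t+\tfrac12\hbar^2\Delta-V)E^{(N)}$ explicitly (the paper's Remark~\ref{discGN}), one finds the terms $\nabla_x W_j\cdot\nabla_x W_\ell\,(\sigma-\tau)^k$ for $N\leq k\leq 2N$ and the time-Taylor remainder $\frac{(\sigma-\tau)^N}{(N-1)!}\int_0^1(1-\theta)^{N-1}\partial_t^N V\,d\theta$, none of which carry a factor of $\hbar$; only the single term $\tfrac{i\hbar}{2}\Delta_x W_N(\sigma-\tau)^N$ does. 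Hence the amplitude lies in $M^{\infty,1}$ with norm $\asymp(\sigma-\tau)^N$ and one only obtains $\|R^{(N)}\|\leq C(\sigma-\tau)^N$, so that $\|E^{(N)}(\tau',\tau)-U(\tau',\tau)\|\leq C\hbar^{-1}(\tau'-\tau)^{N+1}$. The same $\hbar^{-1}$ contaminates your (i): the $M^{\infty,1}$ norm of $\exp\bigl(\tfrac{i}{\hbar}\sum_k W_k(\tau'-\tau)^k\bigr)-1$ is $O((\tau'-\tau)/\hbar)$, not $O(\tau'-\tau)$.

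The fix is to carry the $\hbar^{-1}$ through and absorb it at the very end via $t-s\leq T\hbar$, exactly as the paper does: one still gets $\|\cA_j\|\leq\exp\bigl(C\hbar^{-1}(t-s)\bigr)\leq e^{CT}$ and $\sum_{j}C\hbar^{-1}(t_j-t_{j-1})^{N+1}\leq C\hbar^{-1}\omega(\Omega)^N(t-s)\leq CT\,\omega(\Omega)^N$, which yields \eqref{stimadapro}. One smaller technical point: composing $E^{(N)}-U_0$ on the right with $U_0(\tau,\tau')$ does \emph{not} in general produce a pseudodifferential operator, because the amplitude depends on both $x$ and $y$; the paper instead conjugates by the dilation $U_{\sqrt{\hbar(\tau'-\tau)}}$ to reduce to an OIO with the fixed quadratic phase $\tfrac{i}{2}|x-y|^2$ and then applies an $L^2$-boundedness theorem for OIOs with $M^{\infty,1}$ amplitudes directly.
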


Notice that estimates of this type were already obtained in \cite{fujiwara2,yajima} for a different type of short-time approximate propagators -- the so-called Birkhoff-Maslov parametices (see also \cite{nicola1,nicola2}). Such parametrices have the form of oscillatory integral operators as in \eqref{uzero}, for suitable amplitudes $b(\hbar,t,s,x,y)$ constructed from the Hamiltonian flow, and satisfy better estimates, with {\it positive} powers of $\hbar$ in the right-hand side of \eqref{stimadapro}. In fact, they are very good approximations of the true propagator both for short-time and $\hbar$ small. However, as already observed, the computation of the exact action $S$ is in general a non trivial task and the study of those parametrices require fine arguments and tools from microlocal analysis. On the other hand, the rougher parametrices in \eqref{ezero} are often preferred by the Physics and Chemistry communities for practical purposes; see \cite{deGosson0,makri1,makri2,makri3}. \par

The paper is organized as follows. In Section 2 we fix the notation and list the preliminary definitions and results. Section 3 contains the construction of a suitable short-time approximation for the action and the proof of some of its properties. In Section 4 we study the operator $E^{(N)}(t,s)$ as a parametrix. In Section 5 we present a general strategy for suitable higher-order parametrices and we prove our main result (Theorem \ref{mainteo}). 

\section{Preliminaries}

\textbf{Notation.} We set $\bN=\left\{0,1,2,\ldots \right\}$ and employ the multi-index notation: in particular, given $\alpha= \left(\alpha_1,\ldots,\alpha_d\right) \in \bN^{d}$ and $x\in\rd$, we write
\[
\left| \alpha \right|=\alpha_1+\ldots+\alpha_d, \qquad  x^{\alpha}=x_1^{\alpha_1}\cdots x_d^{\alpha_d}, 
\]
\[
\partial_x^{\alpha}=\frac{\partial^{\alpha_1}}{\partial x_1^{\alpha_1}}\cdots \frac{\partial^{\alpha_d}}{\partial x_d^{\alpha_d}}.
\]
We set $t^2=t\cdot t$, for $t\in\rd$, and
$xy=x\cdot y$ for the scalar product on $\Ren$. The Schwartz class is denoted by  $\sch(\Ren)$, the space of temperate distributions by  $\sch'(\Ren)$.   The brackets  $\la f,g\ra$ denote the extension to $\sch' (\Ren)\times\sch (\Ren)$ of the inner product $\la f,g\ra=\int f(t){\overline {g(t)}}dt$ on $L^2(\Ren)$. The conjugate exponent $p'$ of $p \in [1,\infty]$ is defined by $1/p+1/p'=1$.

The notation $f\lesssim g$ stands for $f \leq C g$, for a suitable constant $C>0$. 

The Fourier transform of a function $f$ on $\rd$ is normalized as \[
\Fur f(\omega)= \int_{\rd} e^{-2\pi i x\omega} f(x)\, dx,\qquad \omega \in \rd.
\]
Recall the definition of translation and modulation operators: for
any $x,\omega \in\mathbb{R}^{d}$ and $f\in\mathcal{S}\left(\mathbb{R}^{d}\right)$,
\[
\left(T_{x}f\right)\left(t\right)\coloneqq f\left(t-x\right),\qquad\left(M_{\omega}f\right)\left(t\right)\coloneqq e^{2 \pi i \omega t}f\left(t\right).
\]
These operators can be extended by duality on temperate distributions:
for any $x,\omega\in\mathbb{R}^{d}$, $u\in\mathcal{S}'\left(\mathbb{R}^{d}\right)$
and $\varphi\in\mathcal{S}\left(\mathbb{R}^{d}\right)$, we have
\[
\left\langle T_{x}u,\varphi\right\rangle \coloneqq\left\langle u,T_{-x}\varphi\right\rangle ,\qquad\left\langle M_{\omega}u,\varphi\right\rangle \coloneqq\left\langle u,M_{-\omega}\varphi\right\rangle .
\]
In according with a harmless improper custom, we occasionally write
$\left(T_{x}u\right)\left(t\right)=u\left(t-x\right)$ even for $u\in\mathcal{S}'(\mathbb{R}^{d})$.
The composition $\pi\phas=M_\omega T_x$ constitutes a so-called time-frequency shift.

For any $\lambda\neq0$, consider the unitary and non-normalized scalar dilation operators 
on $L^{2}\left(\mathbb{R}^{d}\right)$ defined by 
\begin{equation}\label{scalardilat}
U_{\lambda}f\left(x\right)\coloneqq\left|\lambda\right|^{d/2}f\left(\lambda x\right),\qquad D_{\lambda}f\left(x\right)\coloneqq f\left(\lambda x\right),\qquad f\in L^{2}\left(\mathbb{R}^{d}\right).
\end{equation}
These definitions naturally extend to the case of an  invertible matrix $A\in\mathrm{GL}(d,\bR)$ as
\begin{equation}\label{matrixdilat}
U_{A}f(x)\coloneqq \left| \det A \right|^{1/2}f\left(A x\right), \quad D_{A}f\left(x\right)\coloneqq f\left(A x\right).\end{equation}

\subsection{Short-time Fourier transform} 
The short-time Fourier transform (STFT) of a signal $f\in\cS'(\rd)$ with respect to the window function $g \in \cS(\rd)\setminus\{0\}$ is defined as
\begin{equation}\label{STFTdef}
V_gf\phas=\langle f,\pi\phas g\rangle=\Fur (f\cdot T_x g)(\omega)=\int_{\Ren}
f(y)\, {\overline {g(y-x)}} \, e^{-2\pi iy \o }\,dy.
\end{equation}

For a thorough account on the properties of the STFT see \cite{GRO}. It is important to remark that the STFT is deeply connected with other well-known phase-space transforms, such as the ambiguity distribution
$$ Amb\left(f,g\right)\left(x,\omega\right)=\int_{\rd} e^{-2\pi i\omega y}f\left(y+\frac{x}{2}\right)\overline{g\left(y-\frac{x}{2}\right)}dy, $$ and the Wigner transform
$$W(f,g)(x,\omega )=\int_{\mathbb{R}^{d}}e^{-2\pi iy \omega }f\left(x+\frac{y}{2}\right)%
\overline{g\left(x-\frac{y}{2}\right)}\, dy.$$

In particular, the following relations hold for any $x,\omega \in \rd$ and $f\in \cS'(\rd), g\in \cS (\rd) \setminus \{0\}$: \[A(f,g)(x,\omega)=e^{\pi i x \omega}V_g f (x,\omega), \quad W(f,g)(x,\omega)=2^d e^{4\pi i x \omega} V_{\cI g} f (2x,2\omega),\] where $\cI g (t) = g(-t)$. 

For this and other aspects of the connection with phase space analysis, see also \cite{deGosson1}. 

\subsection{Modulation spaces} 

Given a non-zero window $g\in\sch(\Ren)$ and $1\leq p,q\leq
\infty$, the {\it
	modulation space} $M^{p,q}(\Ren)$ consists of all tempered
distributions $f\in\sch'(\Ren)$ such that $V_gf\in L^{p,q}(\Renn )$
(mixed-norm space). Equivalently, $M^{p,q}$ contains the distributions $f$ such that
$$
\|f\|_{M^{p,q}}=\|V_gf\|_{L^{p,q}}=\left(\int_{\Ren}
\left(\int_{\Ren}|V_gf(x,\o)|^p\,
dx\right)^{q/p}d\o\right)^{1/q}  \, <\infty ,
$$
with trivial adjustments if $p$ or $q$ is $\infty$. 
If $p=q$, we write $M^p$ instead of $M^{p,p}$. 

It can be proved that $M^{p,q}(\rd)$ is a Banach space whose definition does not depend on the choice of the window $g$. For this and other properties we address the reader to \cite{GRO}. 

The Sj\"ostrand's class, originally defined in \cite{sjo}, coincides with the choice $p=\infty$, $q=1$. We have that $M^{\infty,1}(\rd)\subset L^{\infty}(\rd)$ and it is a Banach algebra under pointwise product. In fact, much more is true (cf. \cite{narimani}): for any $1\leq p,q \leq \infty$, the following continuous embedding holds:
$$M^{\infty,1}\cdot M^{p,q} \hookrightarrow M^{p,q}.$$ 

Within the broad family of (weighted) modulation spaces, we can retrieve a number of well-known classical spaces, such as Sobolev or Bessel potential spaces - see again \cite{GRO}. Here, we confine ourselves to remark that $L^2 = M^{2}$. 

For the benefit of the reader, let us recall a result on the boundedness of dilation operators on modulation spaces that will be repeatedly used hereafter - cf. \cite[Theorem 3.1]{sugimoto} and \cite[Proposition 3.1]{dilatCN}. 

\begin{lemma}\label{dilatmod}
Let $1\le p,q \le \infty$ and $A\in \mathrm{GL}(d,\bR)$. For any $f\in M^{p,q}(\rd)$, 
\[ \left\Vert D_{A}f \right\Vert_{M^{p,q}} \lesssim C_{p,q}(A) \left\Vert f \right\Vert_{M^{p,q}}, 
\] where 
\[
C_{p,q}(A) = \left|\det A\right|^{-(1/p-1/q+1)}\left( \det (I+A^{\top}A)\right)^{1/2}. 
\]
\end{lemma}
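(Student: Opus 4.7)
The plan is to push the dilation $D_A$ through the short-time Fourier transform onto the phase-space variables and then apply a window-change inequality with a convenient Gaussian window, so that what remains is a classical Gaussian integral on $\rdd$. Since the modulation-space norm is, up to equivalence, independent of the non-zero Schwartz window used to define it, this reduction is both legal and efficient; the trade-off is that each step contributes an explicit constant that must be tracked carefully in order to recover the precise expression for $C_{p,q}(A)$.

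Step 1 (dilation identity for the STFT). A direct substitution $t\mapsto A^{-1}t$ in the definition \eqref{STFTdef} gives, for any $g\in\cS(\rd)\setminus\{0\}$,
\[
V_g(D_A f)(x,\omega)=|\det A|^{-1}\,V_{D_{A^{-1}}g}\,f(Ax,\,A^{-\top}\omega).
\]
The key observation is that the resulting phase-space transformation is \emph{rectangular}: it acts separately on position and frequency. Consequently, performing the independent changes of variables $x'=Ax$ and $\omega'=A^{-\top}\omega$ in the mixed norm and counting Jacobians yields
\[
\|V_g(D_A f)\|_{L^{p,q}}=|\det A|^{-(1/p-1/q+1)}\,\|V_{D_{A^{-1}}g}\,f\|_{L^{p,q}}.
\]
This already produces the first factor in $C_{p,q}(A)$.

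Step 2 (window change against a Gaussian). I now fix $g=\varphi_0(x)=e^{-\pi|x|^2}$ and invoke the standard window-change inequality for modulation spaces,
\[
\|V_\psi f\|_{L^{p,q}}\le \frac{1}{\|\varphi_0\|_{L^2}^2}\,\|V_{\varphi_0}\psi\|_{L^1(\rdd)}\,\|V_{\varphi_0}f\|_{L^{p,q}},
\]
applied with $\psi:=D_{A^{-1}}\varphi_0$. The right-most factor is exactly $\|f\|_{M^{p,q}}$, so it remains to estimate $\|V_{\varphi_0}(D_{A^{-1}}\varphi_0)\|_{L^1(\rdd)}$. Since both window and signal are (matrix) Gaussians, their STFT is a Gaussian on phase space with quadratic form determined by $A A^\top$, and its $L^1$-norm reduces to a closed-form Gaussian integral. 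Diagonalizing $A^\top A$ via its singular value decomposition (or computing directly by Fourier transforming the product of two Gaussians) produces a constant proportional to $(\det(I+A^{\top}A))^{1/2}$, which supplies the second factor of $C_{p,q}(A)$.

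The genuine computational core lies in Step 2: the bookkeeping of normalizing constants inside the Gaussian integral, and the identification of the resulting determinant as $\det(I+A^{\top}A)$. Step 1, though slightly tedious, reduces to writing out two changes of variable in a mixed-norm integral after one notices that the phase-space dilation decouples $x$ from $\omega$. Combining the two steps gives \eqref{STFTdef}-consistent bounds with precisely the stated constant $C_{p,q}(A)=|\det A|^{-(1/p-1/q+1)}(\det(I+A^{\top}A))^{1/2}$.
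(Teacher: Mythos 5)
Your proof is correct: Step 1's STFT dilation identity and Jacobian count give exactly the factor $\left|\det A\right|^{-(1/p-1/q+1)}$, and the Gaussian computation in Step 2 does evaluate to $\bigl(\det(I+A^{\top}A)\bigr)^{1/2}$ (after rewriting $\det(I+A^{-\top}A^{-1})=\det(I+A^{\top}A)/(\det A)^2$), with the quantitative window-change inequality correctly doing the work that mere window-independence of the norm could not, since the window $D_{A^{-1}}\varphi_0$ depends on $A$. The paper itself gives no proof of this lemma --- it is quoted from Sugimoto--Tomita and Cordero--Nicola --- and your argument is essentially the standard proof found in those references.
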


\section{Short-time action and related estimates}
We begin with a brief discussion devoted to explain the structure of formula \eqref{sn} for the approximate action $S^{(N)}(t,s,x,y)$ appearing in \eqref{ezero}. We refer to \cite[Section 4.5]{deGosson0} for more details.\par
It is well known that, for a classical Hamiltonian of physical type 
\[
H(x,p,t)=\frac{1}{2}p^2 + V(t,x),
\]
 the action $S(t,s,x,y)$ satisfies the Hamilton-Jacobi equation
\begin{equation}\label{HJ}
\frac{\partial S}{\partial t}+\frac{1}{2}|\nabla_x S|^2+V(t,x)=0. 
\end{equation}
In order for $E^{(N)}$ to be a parametrix in a sense to be specified (cf. Remark \ref{discGN} below), we consider the slightly modified equation 
\begin{equation}\label{HJmod}
\frac{\partial S}{\partial t}+\frac{1}{2}|\nabla_x S|^2+V(t,x)+\frac{i\hbar d}{2(t-s)} - \frac{i\hbar}{2}\Delta_x S=0, 
\end{equation} and look for a solution $S$ in the form $S(t,s,x,y)=\frac{|x-y|^2}{2(t-s)}+R(t,s,x,y)$, $s < t$. This yields an equivalent equation for $R$, namely  
\[
\frac{\partial R}{\partial t}+\frac{1}{2}|\nabla_x R|^2+V(t,x)+\frac{1}{t-s}(x-y)\cdot\nabla_x R - \frac{i\hbar}{2}\Delta_x R=0. 
\]
Assume that \[
R(t,s,x,y)=W_0+W_1(s,x,y)(t-s)+W_2(s,x,y)(t-s)^2+\ldots,
\] where the functions $W_k(s,x,y)$ will be briefly denoted by $W_k(x,y)$ from now on. We immediately find $W_0=0$ and, for $k\geq1$, by equating to $0$ the coefficient of the term $(t-s)^{k-1}$, we obtain the equations
\begin{equation}\label{PDEWk}
k W_k(x,y)+(x-y)\cdot \nabla_x W_k(x,y)=F_k(x,y),
\end{equation} where we set
	\begin{equation}\label{Fk} F_k(x,y)=-\frac{1}{2}\sum_{j+\ell=k-1\atop j\geq 1, \ell\geq 1}\nabla_x W_j\cdot \nabla _x W_{\ell}-\frac{1}{(k-1)!}\partial^{k-1}_t V(s,x)+\frac{i\hbar}{2}\Delta_x W_{k-1}.\end{equation}

\begin{lemma}
	For any integer $k\geq 1$ there exists a unique continuous solution of \eqref{PDEWk}, namely
	
	\begin{equation}\label{Wk}
	W_k(x,y)=\int_0^1 \tau^{k-1} F_k(\tau x + (1-\tau)y,y) d\tau .
	\end{equation}
\end{lemma}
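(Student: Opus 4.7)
The PDE \eqref{PDEWk} is first-order, linear, and dilation-homogeneous with respect to the point $y$: the vector field $(x-y)\cdot\nabla_x$ vanishes on the diagonal $x=y$ and its integral curves are the straight rays $\tau\mapsto z(\tau):=\tau x+(1-\tau)y=y+\tau(x-y)$, $\tau\in[0,1]$. The natural plan is therefore the method of characteristics, which will simultaneously yield existence and uniqueness in a single calculation.

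The key manipulation is the following. For any $C^{1}$ solution $W_k$ of \eqref{PDEWk}, I would set $w(\tau):=W_k(z(\tau),y)$; since $z(\tau)-y=\tau(x-y)$, the chain rule gives $\tau w'(\tau)=\tau(x-y)\cdot\nabla_x W_k(z(\tau),y)$, so evaluating \eqref{PDEWk} along the ray yields
\[
kw(\tau)+\tau w'(\tau)=F_k(z(\tau),y),\qquad\text{i.e.}\qquad \frac{d}{d\tau}\bigl[\tau^k w(\tau)\bigr]=\tau^{k-1}F_k(z(\tau),y).
\]
Because $w$ is continuous on $[0,1]$ and $k\ge 1$, one has $\tau^k w(\tau)\to 0$ as $\tau\to 0^+$, so integrating the above identity on $[0,1]$ gives precisely the stated formula; this proves uniqueness. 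For existence I would take the formula as a definition, differentiate under the integral sign to obtain $(x-y)\cdot\nabla_x W_k(x,y)=\int_0^1 \tau^k(x-y)\cdot\nabla_x F_k(z(\tau),y)\,d\tau=\int_0^1 \tau^k\frac{d}{d\tau}F_k(z(\tau),y)\,d\tau$, and then integrate by parts in $\tau$ (with no boundary contribution at $\tau=0$ since $k\ge 1$) to recover $F_k(x,y)-kW_k(x,y)$, which is \eqref{PDEWk}.

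I do not expect a genuine obstacle: the whole content is the Green-function computation for the homogeneous operator $k+(x-y)\cdot\nabla_x$. The only care required is regularity bookkeeping, since $F_k$ involves the gradients of $W_1,\dots,W_{k-1}$; an induction on $k$, coupled with the regularity of $V$ from Assumption (A) and the mapping properties of modulation spaces recalled in Section 2, is implicitly needed to justify the interchange of differentiation and integration and to hand $W_k$ off to the next step of the recursion with enough smoothness. This inductive verification is routine and not the conceptual core of the statement.
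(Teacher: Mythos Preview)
Your argument is correct and is essentially the same method-of-characteristics proof as in the paper: the paper parametrizes the rays through $y$ exponentially as $x_u(\lambda)=y+ue^\lambda$, obtains a constant-coefficient linear ODE, and then changes variables back to $\tau$, whereas you parametrize linearly by $z(\tau)=y+\tau(x-y)$ and land directly on the Euler-type ODE $(\tau^k w)'=\tau^{k-1}F_k(z(\tau),y)$. Your version is slightly more streamlined and also makes the existence check explicit via integration by parts, which the paper leaves implicit.
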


\begin{proof}
	According to the methods of characteristics, along the curves of type $x_u(\lambda)=y+ue^\lambda$, where $\lambda\in\mathbb{R}$ and $u\in\rd$ has unitary norm, the original PDE \eqref{PDEWk} becomes a linear ODE with respect to the variable $\lambda$:
	$$ \frac{\mathrm{d}}{\mathrm{d}\lambda}W_k(x_u(\lambda),y) + k W_k (x_u(\lambda),y) = F_k(x_u(\lambda),y),$$ whose solutions are given by 
	$$W_k(x_u(\lambda),y)=e^{-k\lambda}\left(\int_{-\infty}^{\lambda} e^{k\sigma}F_k(x_u(\sigma),y)d\sigma + C \right),$$ where $C\in\mathbb{R}$ is an arbitrary constant. Notice that $\lambda = \log{\|x-y\|}$ and the change of variable $\sigma=\log{( \|x-y\|\tau)}$ thus gives 
	$$	W_k(x,y)=\int_0^1 \tau^{k-1} F_k(\tau x + (1-\tau)y,y)d\tau + \frac{C}{\| x-y \|^k}.$$
	It is therefore clear that the unique continuous solution corresponds to $C=0$. 
\end{proof}	
	
\par

For $N=1,2,\ldots$, we now define the approximate generating functions as in \eqref{sn}, namely
\begin{equation}\label{SN}
S^{(N)}(t,s,x,y)=\frac{|x-y|^2}{2(t-s)}+R^{(N)}(t,s,x,y),
\end{equation}
where 
\begin{equation}\label{RN}
R^{(N)}(t,s,x,y)\coloneqq \sum_{k=1}^N W_k(x,y)(t-s)^k
\end{equation} and $W_k(x,y)$ is defined in \eqref{Wk}. \par

In particular, the first-order approximation of the action $(N=1)$ is
\[
S^{(1)}(t,s,x,y)=\frac{|x-y|^2}{2(t-s)}-(t-s)\int_0^1{V(s,\tau x + (1-\tau)y)d\tau},
\]

It is worth mentioning that determining the correct short-time approximations to the action functional is an important matter, initially investigated by Makri and Miller in \cite{makri1,makri2,makri3} - see also \cite{deGosson2} and the recent paper \cite{deGosson3} for its relevance to  quantization issues. We remark that the authors consider power series solutions of the Hamilton-Jacobi equation \eqref{HJ} instead of \eqref{HJmod}, but this would provide poor approximating power for $E^{(N)}$ as a parametrix, namely a first order error in $t-s$ regardless of $N$. In fact, the results in \cite{fujiwara2} show that the parametrix in \eqref{ezero} with $S^{(N)}$ replaced by the true action $S$ does not enjoy better estimates than a first order one, even for smooth potentials. \\

Concerning the regularity of the terms $W_k$ in $S^{(N)}$, we can prove the following result. 

\begin{proposition}\label{W sj}
	If $V$ satisfies Assumption $\mathrm{(A)}$, then for any $1\leq k \leq N$ we have 
	\[
	\|\partial_x^{\alpha}W_k\|_{M^{\infty,1}(\rdd)}\leq C, \quad \text{for}\ |\alpha|\leq 2(N-k+1),\ s\in\R,
	\]  
	for some constant $C>0$.
\end{proposition}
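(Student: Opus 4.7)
The plan is induction on $k=1,\ldots,N$, built on \eqref{Wk}--\eqref{Fk}, the Banach algebra property of $M^{\infty,1}$, and Lemma \ref{dilatmod} (applied in dimension $2d$); uniform control in $s\in\R$ comes from Assumption (A). Two preliminary remarks will be used repeatedly. First, $1\in M^{\infty,1}(\rd)$ (since $|V_g 1(x,\omega)|=|\hat g(\omega)|\in L^1$ for a Schwartz window $g$), hence $f\mapsto f\otimes 1$ embeds $M^{\infty,1}(\rd)$ continuously into $M^{\infty,1}(\rdd)$. Second, the matrix
\[
A_\tau=\begin{pmatrix}\tau I_d & (1-\tau)I_d\\ 0 & I_d\end{pmatrix}
\]
belongs to $\mathrm{GL}(2d,\R)$ for $\tau\in(0,1]$, and the associated dilation constant from Lemma \ref{dilatmod} with $(p,q)=(\infty,1)$ reduces to $(\det(I+A_\tau^\top A_\tau))^{1/2}$, which stays uniformly bounded for $\tau\in[0,1]$ (the formula \eqref{matrixdilat} does not produce a $|\det A_\tau|^{-1}$ factor in this case).

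The base case $k=1$ is immediate: since $W_0\equiv 0$, we have $F_1(x,y)=-V(s,x)$ and
\[
\partial_x^\alpha W_1(x,y) = -\int_0^1\tau^{|\alpha|}(\partial_x^\alpha V)(s,\tau x + (1-\tau)y)\,d\tau, \quad |\alpha|\leq 2N,
\]
whose integrand equals $(\partial_x^\alpha V(s,\cdot)\otimes 1)\circ A_\tau$. By the two remarks above and Assumption (A), its $M^{\infty,1}(\rdd)$-norm is uniformly bounded in $\tau\in(0,1]$ and $s\in\R$; Minkowski's integral inequality in the Banach space $M^{\infty,1}$ closes the case.

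For the inductive step, assume the bound for all indices below $k$ and differentiate \eqref{Wk} under the integral:
\[
\partial_x^\alpha W_k(x,y) = \int_0^1\tau^{k-1+|\alpha|}(\partial_1^\alpha F_k)(\tau x + (1-\tau)y, y)\,d\tau,
\]
with $\partial_1$ acting on the first $\rd$-slot of $F_k$. The core step is to show $\partial_1^\alpha F_k\in M^{\infty,1}(\rdd)$ uniformly in $s$ for $|\alpha|\leq 2(N-k+1)$, which I would verify term by term in \eqref{Fk}. For the quadratic sum $\nabla_x W_j\cdot\nabla_x W_\ell$ with $j+\ell=k-1$ and $j,\ell\geq 1$, Leibniz's rule expands $\partial_1^\alpha(\partial_1^{e_i}W_j\cdot\partial_1^{e_i}W_\ell)$ into products $\partial_1^{\beta+e_i}W_j\cdot\partial_1^{\alpha-\beta+e_i}W_\ell$; since $|\alpha|\leq 2(N-j-\ell+1)$ and $\ell\geq 1$, one checks $|\beta+e_i|\leq 2(N-j+1)-1$, and symmetrically $|\alpha-\beta+e_i|\leq 2(N-\ell+1)-1$, so both factors fall within the inductive range and the Banach algebra property of $M^{\infty,1}$ controls the product. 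The term $\partial_t^{k-1}V(s,x)$ is handled by Assumption (A) (since $2(k-1)+|\alpha|\leq 2N$) together with the tensor-product embedding into $M^{\infty,1}(\rdd)$. Finally, $\Delta_x W_{k-1}$ calls for orders $|\alpha|+2\leq 2(N-(k-1)+1)$ of $W_{k-1}$, precisely what the inductive hypothesis supplies.

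Once $\partial_1^\alpha F_k\in M^{\infty,1}(\rdd)$ is established, applying Lemma \ref{dilatmod} to $A_\tau$ and integrating via Minkowski closes the induction; the factor $\tau^{k-1+|\alpha|}$ annihilates the integrand at the sole degenerate point $\tau=0$ whenever $k\geq 2$. The main technical obstacle is the Leibniz bookkeeping in the quadratic step: one must verify that for \emph{every} multi-index $\beta\leq\alpha$ and \emph{every} admissible pair $(j,\ell)$, both derivative orders land within the inductive range. The structural constraint $j,\ell\geq 1$ in the definition of $F_k$ creates exactly the numerical slack needed.
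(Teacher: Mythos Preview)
Your proof is correct and follows essentially the same route as the paper's: induction on $k$, the tensor-with-$1$ lift of $V$ to $\rdd$ (the paper writes this as $V'(s,x,y)=V(s,x)$), the same block-triangular matrix $A_\tau$ (called $M_\tau$ there), Lemma~\ref{dilatmod} with $(p,q)=(\infty,1)$ giving a $\tau$-uniform constant, and Minkowski to pass the norm inside the integral. The only visible difference is that you spell out the Leibniz bookkeeping for the quadratic term in $F_k$, whereas the paper simply asserts that $\partial_x^\alpha F_{k+1}\in M^{\infty,1}(\rdd)$ for $|\alpha|\le 2(N-k)$ follows from the inductive hypothesis; your count-check is correct and makes that step explicit.
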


\begin{proof}
Let us first prove the claim for $k=1$. For $\left(z,\zeta\right)\in\mathbb{R}^{2d}$, the STFT of $\partial_x^{\alpha}W_{1}$, $|\alpha|\leq 2N$, can be written as
	\begin{align*}
	\left| V_{g} \partial_x^{\alpha}W_{1}\left(z,\zeta\right)\right| & =\left|\int_{0}^{1} \tau^{|\alpha|} V_{g}\left[\partial_x^{\alpha}V\left(s,\tau x+\left(1-\tau\right)y\right)\right]\left(z,\zeta\right)d\tau\right|.
	\end{align*}
We now think of $V$ as a function on $\rdd$. More precisely, define
$$V'(s,x,y) \coloneqq V(s,x), \quad s\in\R,\ x,y\in\rd$$ and notice that $V'$ still satisfies Assumption $\mathrm{(A)}$ with $M^{\infty,1}(\rd)$ replaced by $M^{\infty,1}(\rdd)$. \par
Let us introduce the parametrized matrices $M_{\tau}=\left(\begin{array}{cc}
\tau I & \left(1-\tau\right)I\\
0 & I
\end{array}\right)\in\GLL$, with $\tau\in\left(0,1\right]$. We can thus write
$V\left(s,\tau x+\left(1-\tau\right)y\right) = V'(s, M_{\tau} (x,y)) $, and from Lemma \ref{dilatmod} we have $\partial_x^{\alpha}V'(s,M_\tau (x,y))\in M^{\infty,1}(\rdd)$. Therefore,   
\begin{flalign*}
\left\Vert \partial_x^{\alpha}W_{1} \right\Vert _{M^{\infty,1}\left(\mathbb{R}^{2d}\right)} & \lesssim\int_{0}^{1} \tau^{|\alpha|} \left\Vert \partial_x^{\alpha}V'\left(s,M_{\tau}\cdot\right)\right\Vert _{M^{\infty,1}\left(\mathbb{R}^{2d}\right)}d\tau \\
& \lesssim\left(\int_{0}^{1}\tau^{|\alpha|} C_{\infty,1}(M_{\tau})d\tau\right)\left\Vert \partial_x^{\alpha}V'\right\Vert _{M^{\infty,1}\left(\mathbb{R}^{2d}\right)}<C,
\end{flalign*}
where the last estimate follows from the fact that $$C_{\infty,1}(M_{\tau})=\left( \det (I+M_{\tau}^{\top}M_{\tau})\right)^{1/2}$$ is a continuous function of the parameter $\tau\in [0,1]$.

Assume now that the claim holds for any $W_j$ up to a certain $k \leq N-1$ and consider
	\begin{align*}
\left| V_{g} \partial_x^{\alpha}W_{k+1}\left(z,\zeta\right)\right| & =\left|\int_{0}^{1} \tau^{k+|\alpha|} V_{g}\left[\partial_x^{\alpha}F_{k+1}\left(\tau x+\left(1-\tau\right)y,y\right)\right]\left(z,\zeta\right)d\tau\right|.
\end{align*}

It is easy to deduce from \eqref{Fk} and the hypothesis on $W_1,\ldots,W_k$ that $\partial_x^{\alpha}F_{k+1}(x,y)\in M^{\infty,1}(\rdd)$ whenever $|\alpha|\leq 2(N-k)$. Again from Lemma \ref{dilatmod} we have $\partial_x^{\alpha}F_{k+1}(M_\tau (x,y))\in M^{\infty,1}(\rdd)$, and by the same arguments as before we have  
\begin{flalign*}
\left\Vert \partial_x^{\alpha}W_{k+1} \right\Vert _{M^{\infty,1}\left(\mathbb{R}^{2d}\right)} & \lesssim\int_{0}^{1} \tau^{k+|\alpha|} \left\Vert \partial_x^{\alpha}F_{k+1}\left(M_{\tau}\cdot\right)\right\Vert _{M^{\infty,1}\left(\mathbb{R}^{2d}\right)}d\tau \\
& \lesssim\left(\int_{0}^{1}\tau^{k+|\alpha|} C_{\infty,1}(M_{\tau})d\tau\right)\left\Vert \partial_x^{\alpha}W_{k+1}\right\Vert _{M^{\infty,1}\left(\mathbb{R}^{2d}\right)}<C.
\end{flalign*}

The claim is then proved by induction.  
\end{proof}




\begin{proposition}\label{bound exp}
	If the potential function $V$ satisfies Assumption $\mathrm{\left(A\right)}$, then $e^{\frac{i}{\hbar}R^{(N)}}\in M^{\infty,1}\left(\mathbb{R}^{2d}\right)$, with $R^{(N)}$ as in \eqref{RN}. More precisely, 
	$$ \| e^{\frac{i}{\hbar}R^{(N)}}\|_{M^{\infty,1}} \le C(T),$$ for $0\leq t-s \leq T\hbar$, $0<\hbar\leq 1$. 
\end{proposition}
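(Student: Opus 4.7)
The plan is to combine the Banach algebra structure of $M^{\infty,1}(\rdd)$ under pointwise multiplication (recalled in Section~2) with the uniform bounds on the coefficients $W_k$ provided by Proposition \ref{W sj}. Denote by $K$ the algebra constant, so that $\|fg\|_{M^{\infty,1}} \leq K \|f\|_{M^{\infty,1}} \|g\|_{M^{\infty,1}}$ and hence $\|f^n\|_{M^{\infty,1}} \leq K^{n-1} \|f\|_{M^{\infty,1}}^n$ for every $n \geq 1$.

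The first step is to bound $\|\hbar^{-1} R^{(N)}\|_{M^{\infty,1}}$ uniformly in the admissible range of parameters. By Proposition \ref{W sj} with $\alpha=0$ one has $\|W_k\|_{M^{\infty,1}(\rdd)} \leq C$ for $k=1,\ldots,N$, uniformly in $s\in\R$. The decisive $\hbar$-scaling observation is that, under the constraints $0 \leq t-s \leq T\hbar$ and $0 < \hbar \leq 1$,
\[
\frac{(t-s)^k}{\hbar} = \frac{t-s}{\hbar}\,(t-s)^{k-1} \leq T\,(T\hbar)^{k-1} \leq T^k,
\]
so summing in $k$ gives $\|\hbar^{-1} R^{(N)}\|_{M^{\infty,1}} \leq M(T)$ for a constant $M(T)$ depending only on $T$ and $N$. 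The second step is to transfer this bound on the phase to a bound on the exponential via the Taylor expansion $e^{iz} = \sum_{n\geq 0}(iz)^n/n!$. Since the constant function $1$ belongs to $M^{\infty,1}(\rdd)$ --- a one-line computation with a Schwartz window $g$ gives $|V_g 1(z,\zeta)|=|\widehat{g}(-\zeta)|$, whence $\|1\|_{M^{\infty,1}} = \|\widehat{g}\|_{L^1}$ --- the algebra estimate applied termwise yields
\[
\Big\|e^{\frac{i}{\hbar} R^{(N)}}\Big\|_{M^{\infty,1}} \leq \|1\|_{M^{\infty,1}} + \sum_{n=1}^{\infty}\frac{K^{n-1}M(T)^n}{n!} = \|1\|_{M^{\infty,1}} + \frac{1}{K}\bigl(e^{K M(T)}-1\bigr),
\]
which is exactly the asserted bound $C(T)$.

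The only real delicacy is in identifying the correct $\hbar$-scaling: a naive bound of the type $\|\hbar^{-1}R^{(N)}\|_{M^{\infty,1}} \lesssim \hbar^{-1}\sum_k(t-s)^k$ blows up as $\hbar \to 0$, and one genuinely needs to exploit the constraint $t-s \leq T\hbar$ from the hypothesis to absorb the singular factor $\hbar^{-1}$ into the $(t-s)$'s. Apart from this scaling bookkeeping, the proof is mechanical: no new time-frequency estimate beyond Proposition \ref{W sj} is required, only the Banach algebra property of the Sj\"ostrand class and the membership $1 \in M^{\infty,1}(\rdd)$.
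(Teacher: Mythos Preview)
Your argument is correct and follows essentially the same route as the paper's proof: power-series expansion of the exponential combined with the Banach algebra property of $M^{\infty,1}$ and the scaling $(t-s)\le T\hbar$, $\hbar\le 1$ to absorb the factor $\hbar^{-1}$. The only cosmetic difference is that the paper factors $e^{\frac{i}{\hbar}R^{(N)}}=\prod_{k=1}^N e^{\frac{i}{\hbar}(t-s)^k W_k}$ and bounds each factor separately, whereas you bound $\|\hbar^{-1}R^{(N)}\|_{M^{\infty,1}}$ in one stroke before exponentiating; both yield the same conclusion with the same ingredients.
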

\begin{proof}
 If $V$ satisfies
	Assumption $\mathrm{\left(A\right)}$, Proposition \ref{W sj} holds
	and $\partial^{\alpha}W_{k}\left(x,y\right)\in M^{\infty,1}\left(\mathbb{R}^{2d}\right)$
	for any $\left|\alpha\right|\leq 2(N-k+1)$. In particular, $W_k \in M^{\infty,1}(\rdd)$ for all $k=1,\ldots,N$ and thus $R^{(N)}\in M^{\infty,1}(\rdd)$. 
	
	Given that $M^{\infty,1}\left(\mathbb{R}^{2d}\right)$ is a Banach
	algebra for pointwise multiplication, it is enough to show the desired estimate for   $e^{\frac{i}{\hbar}\left(t-s\right)^kW_{k}}$, for any $1\leq k \leq N$: 

\begin{alignat*}{1}
\left\Vert e^{\frac{i}{\hbar}\left(t-s\right)^kW_{k}}\right\Vert _{M^{\infty,1}} & = \left\Vert \sum_{n=0}^{\infty}\frac{i^{n}\left(t-s\right)^{kn}\left(W_{k}\right)^{n}}{\hbar^{n}n!}\right\Vert _{M^{\infty,1}} \\
& \le\sum_{n=0}^{\infty}\frac{C^{n-1}\left(t-s\right)^{kn}\left\Vert W_{k}\right\Vert _{M^{\infty,1}}^{n}}{\hbar^{n}n!}\\
& =C^{-1}e^{\frac{C}{\hbar}\left(t-s\right)^{k}\left\Vert W_{k}\right\Vert _{M^{\infty,1}}}\leq C(T), 
\end{alignat*}
for $0\leq t-s \leq T\hbar$, $0<\hbar\le 1$. 
\end{proof}

\section{Short-time approximate propagator} 

Let us first recall that the Cauchy problem for the Schr\"odinger equation with bounded potentials is globally well-posed in $L^2(\rd)$. This is an easy and classic result that can be stated as follows.

\begin{proposition}\label{wellpos L2}
	Assume that $V$ is a real-valued function on $\mathbb{R}\times\mathbb{R}^{d}$
	satisfying $V\in C^{\infty}(\bR,L^{\infty}(\rd))$ and let $s\in\mathbb{R}$.
	Then, the Cauchy problem 
	\[
	\begin{cases}
	i\hbar\partial_{t}u=-\frac{1}{2}\hbar^{2}\Delta u+V\left(t,x\right)u\\
	u\left(s,x\right)=u_{0}\left(x\right)
	\end{cases}
	\]
	is (backward and) forward globally well-posed in $L^{2}(\mathbb{R}^{d})$ and the corresponding propagator $U\left(t,s\right)$ is a unitary operator on $L^2(\rd)$. 
\end{proposition}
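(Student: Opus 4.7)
The plan is to combine Duhamel's formula with a Banach contraction argument to build a mild $L^2$-solution, both forward and backward in time, and then to obtain the unitarity of the resulting propagator via a regularization of the potential.

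Let $U_{0}(t) = e^{it\hbar\Delta/2}$ denote the free propagator, which is unitary on $L^{2}(\rd)$. I first recast the Cauchy problem as the Volterra-type integral equation
\[
u(t) = U_{0}(t-s)u_{0} - \frac{i}{\hbar}\int_{s}^{t} U_{0}(t-\tau)\, V(\tau,\cdot)\, u(\tau)\, d\tau.
\]
Since $U_{0}(\cdot)$ is isometric on $L^{2}$ and multiplication by $V(\tau,\cdot)$ is bounded on $L^{2}$ with operator norm at most $\|V(\tau,\cdot)\|_{L^{\infty}}$, the right-hand side defines a strict contraction on $C([s-\delta,s+\delta];L^{2}(\rd))$ whenever $\delta\cdot\sup_{|\tau-s|\le\delta}\|V(\tau,\cdot)\|_{L^{\infty}}/\hbar<1$. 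Since $V\in C^{\infty}(\R, L^{\infty}(\rd))$ makes the supremum locally bounded in $s$, the admissible $\delta$ can be chosen uniformly on compact time intervals, and iteration produces a unique global mild solution $u\in C(\R;L^{2}(\rd))$. Setting $U(t,s)u_{0}\coloneqq u(t)$ defines a strongly continuous two-parameter family in $\cB(L^{2}(\rd))$ enjoying the evolution property $U(t,\tau)U(\tau,s)=U(t,s)$.

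For unitarity I would mollify the potential. Let $\rho_{\varepsilon}$ be a standard mollifier on $\rd$ and set $V_{\varepsilon}(t,x)\coloneqq (V(t,\cdot)\ast \rho_{\varepsilon})(x)$, so that each $V_{\varepsilon}$ is smooth in $x$ with bounded derivatives, $\|V_{\varepsilon}\|_{L^{\infty}}\le\|V\|_{L^{\infty}}$, and $V_{\varepsilon}\to V$ pointwise almost everywhere. For such smooth potentials, the classical theory of time-dependent self-adjoint generators (e.g.\ Kato's framework for Schr\"odinger Hamiltonians) produces a unitary propagator $U_{\varepsilon}(t,s)$ on $L^{2}(\rd)$. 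Subtracting the two Duhamel representations and applying Gr\"onwall's lemma yields $U_{\varepsilon}(t,s)u_{0}\to U(t,s)u_{0}$ in $L^{2}$, uniformly on compact time intervals, for every $u_{0}\in L^{2}$; hence $U(t,s)$ is a strong $L^{2}$-limit of unitary operators, and therefore an isometry. Surjectivity follows because the very same contraction argument, run backwards from time $t$ to time $s$, produces an inverse, so $U(t,s)^{-1}=U(s,t)$.

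The only technical subtlety is the justification of the energy identity $\tfrac{d}{dt}\|u(t)\|_{L^{2}}^{2}=0$ directly at the $L^{2}$ level: since $V(t,\cdot)\in L^{\infty}$ need not preserve $H^{2}(\rd)$, one cannot naively differentiate and integrate by parts against $\bar u$. The mollification above sidesteps this by transferring the energy computation to smooth potentials, where Stone's theorem and the classical self-adjoint theory apply, and then recovering unitarity for the original problem by strong $L^{2}$-convergence.
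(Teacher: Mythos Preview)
The paper does not actually supply a proof of this proposition: it is introduced as ``an easy and classic result'' and then used without further argument. So there is no ``paper's own proof'' to compare against; one can only assess your proposal on its own merits.

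Your Duhamel/contraction construction of a global mild $L^2$-solution is standard and correct, and the evolution property together with the backward-in-time argument indeed furnishes an inverse $U(s,t)=U(t,s)^{-1}$. The mollification step for unitarity also goes through: $V_\varepsilon\to V$ a.e.\ in $x$ for each $t$, dominated convergence gives $\|(V-V_\varepsilon)(\tau)u(\tau)\|_{L^2}\to 0$, and Gr\"onwall then yields $U_\varepsilon(t,s)\to U(t,s)$ strongly, whence $U(t,s)$ is an isometry.

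That said, the detour through $V_\varepsilon$ is more than what is needed. The Hamiltonian $H(t)=-\tfrac{\hbar^2}{2}\Delta+V(t,\cdot)$ is self-adjoint on $L^2(\rd)$ with \emph{constant} domain $H^2(\rd)$ (Kato--Rellich, since $V(t,\cdot)$ is a bounded perturbation of $-\Delta$), and because $V\in C^\infty(\bR,L^\infty(\rd))$ the map $t\mapsto H(t)u$ is $C^1$ into $L^2$ for every $u\in H^2(\rd)$. These are exactly the hypotheses of the standard Kato-type theorems on time-dependent self-adjoint generators, which deliver a unitary propagator directly for the original $V$. Your worry that ``$V(t,\cdot)\in L^\infty$ need not preserve $H^2(\rd)$'' is relevant only to the naive energy identity $\tfrac{d}{dt}\|u\|_{L^2}^2=0$; it is irrelevant for the abstract evolution theory, which requires only that the \emph{full} operator $H(t)$ have a fixed domain, not that the perturbation alone preserve it. In short: your argument is correct, but the very ``Kato framework'' you invoke for $V_\varepsilon$ already applies to $V$ itself.
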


Consider the parametrix $E^{(N)}(t,s)$ in \eqref{ezero}. We have the following result.
\begin{proposition}
For every $T>0$ there exists $C=C(T)>0$ such that, for $0<t-s\leq T\hbar$, $0<\hbar\leq1$,  we have
\begin{equation}\label{dg2}
\| E^{(N)}(t,s)\|_{L^2\to L^2}\leq C. 
\end{equation}
Moreover, for $f\in L^2(\rd)$ we have 
\begin{equation}\label{dg3}
\lim_{t \searrow s} E^{(N)}(t,s)f=f
\end{equation}
in $L^2(\rd)$. 
\end{proposition}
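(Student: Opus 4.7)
The strategy for the $L^{2}$-bound \eqref{dg2} is to reduce $E^{(N)}(t,s)$, by an elementary factorization of its kernel, to a pseudodifferential operator with symbol in $M^{\infty,1}$, to which one can apply Sj\"ostrand's $L^{2}$-boundedness theorem. Setting $\mu=(t-s)\hbar\in(0,T]$ (which uses $\hbar\le1$) and expanding $|x-y|^{2}=|x|^{2}-2x\cdot y+|y|^{2}$, one writes $E^{(N)}(t,s)=M_{\mu}\,A(t,s)\,M_{\mu}$, where $M_{\mu}$ denotes multiplication by the unimodular chirp $e^{i|\cdot|^{2}/(2\mu)}$ (hence an $L^{2}$-isometry), and $A(t,s)$ has integral kernel $(2\pi i\mu)^{-d/2}e^{-ix\cdot y/\mu}\sigma(x,y)$, with $\sigma(x,y):=e^{iR^{(N)}(t,s,x,y)/\hbar}$. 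By Proposition \ref{bound exp}, $\|\sigma\|_{M^{\infty,1}(\mathbb{R}^{2d})}\le C(T)$ uniformly in the admissible range of parameters.

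A rescaling $\eta=-y/(2\pi\mu)$ in the integral defining $A(t,s)$ identifies this operator, up to an $L^{2}$-preserving dilation and a compensating constant factor, with the Kohn--Nirenberg pseudodifferential operator of symbol $a(x,\eta):=\sigma(x,-2\pi\mu\,\eta)$. Lemma \ref{dilatmod}, applied to the block-diagonal matrix $L_{\mu}=\mathrm{diag}(I,-2\pi\mu\, I)\in\GLL$, then gives $a\in M^{\infty,1}(\mathbb{R}^{2d})$ with $\|a\|_{M^{\infty,1}}\lesssim(1+\mu^{2})^{d/2}\|\sigma\|_{M^{\infty,1}}$, which stays uniformly bounded for $\mu\in(0,T]$. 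The $L^{2}$-boundedness of pseudodifferential operators with $M^{\infty,1}$ symbols (cf.\ \cite{sjo}) yields $\|A(t,s)\|_{L^{2}\to L^{2}}\le C(T)$, and \eqref{dg2} follows by sandwiching with the two isometries $M_{\mu}$.

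For the strong convergence \eqref{dg3} I would split
\begin{equation*}
E^{(N)}(t,s)f-f=\bigl(E^{(N)}(t,s)-U_{0}(t-s)\bigr)f+\bigl(U_{0}(t-s)-I\bigr)f,
\end{equation*}
where $U_{0}$ denotes the free Schr\"odinger propagator. The second summand tends to $0$ in $L^{2}$ by strong continuity of the unitary group $U_{0}$ (the case $V\equiv 0$ of Proposition \ref{wellpos L2}). The first summand has the same oscillatory integral structure as $E^{(N)}(t,s)$ but with amplitude $\sigma-1=e^{iR^{(N)}/\hbar}-1$ in place of $\sigma$, so repeating the argument above yields $\|E^{(N)}(t,s)-U_{0}(t-s)\|_{L^{2}\to L^{2}}\lesssim\|\sigma-1\|_{M^{\infty,1}}$. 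For any fixed $\hbar>0$, expanding the exponential in the Banach algebra $M^{\infty,1}$ and using the estimate $\|R^{(N)}\|_{M^{\infty,1}}=O(t-s)$ as $t\searrow s$ (a consequence of the factor $(t-s)^{k}$ in \eqref{RN} together with Proposition \ref{W sj}) one gets $\|\sigma-1\|_{M^{\infty,1}}\to 0$, and \eqref{dg3} follows. The main subtlety is tracking constants through the rescaling and ensuring uniform control as $\mu\downarrow0$, which is however immediate from the explicit formula in Lemma \ref{dilatmod}, depending only on $L_{\mu}^{\top}L_{\mu}$.
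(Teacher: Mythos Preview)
Your proof is correct and reaches the same conclusion via a route that differs in packaging from the paper's. The paper conjugates $E^{(N)}(t,s)$ by the unitary dilations $U_{\sqrt{\hbar(t-s)}}$ to obtain an oscillatory integral operator with the \emph{fixed} quadratic phase $e^{i|x-y|^{2}/2}$ and a dilated amplitude, and then invokes Boulkhemair's $L^{2}$-boundedness theorem for FIOs with $M^{\infty,1}$ amplitude \cite{boulk}. You instead strip off the two unimodular chirps $e^{i|\cdot|^{2}/(2\mu)}$ and rescale the remaining kernel so as to land on a Kohn--Nirenberg pseudodifferential operator with symbol $a(x,\eta)=\sigma(x,-2\pi\mu\eta)$, and then appeal to Sj\"ostrand's classical PDO result. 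Both reductions use exactly the same two inputs (Proposition \ref{bound exp} for the amplitude and Lemma \ref{dilatmod} for the dilation), and both lead to a constant depending only on $\mu=\hbar(t-s)\in(0,T]$. Your reduction has the mild advantage of invoking only the PDO case rather than the FIO version; the paper's has the advantage of staying in the same OIO framework used later for $G^{(N)}$.

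One small imprecision worth flagging: when you write that the rescaling ``identifies $A(t,s)$, up to an $L^{2}$-preserving dilation and a compensating constant factor, with the Kohn--Nirenberg pseudodifferential operator of symbol $a$'', there is in fact also a Fourier transform hidden in the identification. Concretely one finds $A(t,s)=i^{-d/2}\,\mathrm{Op}(a)\,\mathcal{F}^{-1}\,U_{-2\pi\mu}$, so that $\|A(t,s)\|_{L^{2}\to L^{2}}=\|\mathrm{Op}(a)\|_{L^{2}\to L^{2}}$ because $\mathcal{F}^{-1}$ and $U_{-2\pi\mu}$ are both unitary. This does not affect your argument, but the sentence as written is slightly inaccurate. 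The treatment of \eqref{dg3} by splitting off the free propagator and showing $\|\sigma-1\|_{M^{\infty,1}}\to 0$ via the exponential series is the same as the paper's.
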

\begin{proof}
First, notice that
	\[
	E^{\left(N\right)}\left(t,s\right)f\left(x\right)=\frac{1}{\left(2\pi i\left(t-s\right)\hbar\right)^{d/2}}\int_{\mathbb{R}^{d}}e^{\frac{i}{\hbar}\frac{\left|x-y\right|^{2}}{2\left(t-s\right)}}e^{\frac{i}{\hbar}R^{(N)}(t,s,x,y)}f\left(y\right)dy
	\]
	is an OIO with the free-particle-action as phase
	function and amplitude $a^{(N)}\left(t,s,x,y\right)\coloneqq \exp{\left(\frac{i}{\hbar}R^{(N)}(t,s,x,y)\right)}\in M^{\infty,1}\left(\mathbb{R}^{2d}\right)$ by Proposition \ref{bound exp}.
	There is a number of results concerning the $L^{2}$-boundedness in
	this context, but in order to keep track of the short-time behaviour
	in the estimates a few more steps are needed.

	First, notice that the dilation operators defined in \eqref{scalardilat} allow us to rephrase $E^{\left(N\right)}\left(t,s\right)$ as
	follows:
	\[
	E^{\left(N\right)}\left(t,s\right)=U_{\frac{1}{\sqrt{\hbar(t-s)}}}\tilde{E}^{\left(N\right)}\left(t,s\right)U_{\sqrt{\hbar(t-s)}},
	\]
	where
	\[
	\tilde{E}^{\left(N\right)}\left(t,s\right)f\left(x\right)=\frac{1}{\left(2\pi i\right)^{d/2}}\int_{\mathbb{R}^{d}}e^{\frac{i}{2}\left|x-y\right|^{2}}\tilde{a}^{(N)}(t,s,x,y)f\left(y\right)dy
	\]
	is an OIO whose phase function is free from time and $\hbar$ 
	dependence and the amplitude is
	\[
	\tilde{a}^{(N)}\left(t,s,x,y\right)=e^{\frac{i}{\hbar}\sum_{k=1}^{N}W_{k}\left(\sqrt{\hbar(t-s)}x,\sqrt{\hbar(t-s)}y\right)\left(t-s\right)^{k}}=D_{\sqrt{\hbar(t-s)}}a^{(N)}\left(x,y\right).
	\]
	In particular, $\tilde{a}^{(N)}\in M^{\infty,1}(\rdd)$ by Lemma \ref{dilatmod}, and $$ \left\Vert \tilde{a}^{(N)} \right\Vert_{M^{\infty,1}} \leq C(T)\left\Vert a^{(N)} \right\Vert_{M^{\infty,1}}$$ for $0<\hbar(t-s)\le T$. 
		
	We are then able to prove \eqref{dg2} by means of boundedness results for this kind of operators, such as \cite[Theorem 2.1]{boulk}, and Proposition \ref{bound exp}: 
	\[
	\left\Vert E^{\left(N\right)}\left(t,s\right)\right\Vert _{L^{2}\rightarrow L^{2}}=\left\Vert \tilde{E}^{\left(N\right)}\left(t,s\right)\right\Vert _{L^{2}\rightarrow L^{2}}\lesssim\left\Vert \tilde{a}^{(N)}\right\Vert _{M^{\infty,1}}\lesssim \left\Vert a^{(N)}\right\Vert _{M^{\infty,1}} \le C(T),
	\]
	for $0<t-s\le T\hbar$. \par
	For what concerns strong convergence to the identity as $t \searrow s$,
	consider the operator 
	\[
	H^{\left(N\right)}\left(t,s\right)f\left(x\right)=\frac{1}{\left(2\pi i\left(t-s\right)\hbar\right)^{d/2}}\int_{\mathbb{R}^{d}}e^{\frac{i}{\hbar}\frac{\left|x-y\right|^{2}}{2\left(t-s\right)}}\left(e^{\frac{i}{\hbar}R^{(N)}(t,s,x,y)}-1\right)f\left(y\right)dy
	\]
	and employ again the dilations in order to write
	\[
	H^{\left(N\right)}\left(t,s\right)=U_{\frac{1}{\sqrt{\hbar(t-s)}}}\tilde{H}^{\left(N\right)}\left(t,s\right)U_{\sqrt{\hbar(t-s)}},
	\]
	where 
	\[
	\tilde{H}^{\left(1\right)}\left(t,s\right)f\left(x\right)=\frac{1}{\left(2\pi i\right)^{d/2}}\int_{\mathbb{R}^{d}}e^{\frac{i}{2}\left|x-y\right|^{2}}\tilde{b}^{(N)}(t,s,x,y)f\left(y\right)dy
	\]
	is an OIO with amplitude  $\tilde{b}^{(N)}(t,s,x,y)=\tilde{a}^{(N)}\left(t,s,x,y\right)-1\in M^{\infty,1}\left(\mathbb{R}^{2d}\right)$.
	
	The latter can be expanded as follows:
	\begin{align*}
	\tilde{b}^{\left(N\right)}\left(t,s,x,y\right)&=e^{\frac{i}{\hbar}R^{\left(N\right)}\left(t,s,\sqrt{\hbar(t-s)}x,\sqrt{\hbar(t-s)}y\right)}-1\\
	&=\frac{i}{\hbar}\left(t-s\right)\overline{R^{\left(N\right)}}\left(t,s,\sqrt{\hbar(t-s)}x,\sqrt{\hbar(t-s)}y\right),
	\end{align*}
	where 
	\begin{align*}
&\overline{R^{\left(N\right)}}\left(t,s,x,y\right)\\
&=
\sum_{n=1}^\infty \frac{i^{n-1}}{n!}\Big(\frac{t-s}{\hbar}\Big)^{n-1}\Big(\sum_{k=1}^{N}W_{k}\left(\sqrt{\hbar(t-s)}\,x,\sqrt{\hbar(t-s)}\,y\right)\left(t-s\right)^{k-1}\Big)^n.
\end{align*}	
The algebra property of the Sj\"ostrand's class and Lemma \ref{dilatmod} imply that $\overline{R^{\left(N\right)}}$ belongs to a bounded subset of  $ M^{\infty,1}(\rdd)$ for $0<t-s\le T\hbar$, $0<\hbar\leq1$. It is then clear that $\tilde{b}^{(N)}\rightarrow 0$ in $M^{\infty,1}\left(\mathbb{R}^{2d}\right)$ for $t \searrow s$. Therefore, the OIO with operator $\tilde{b}^{(N)}$ has operator norm converging to 0 as $t \searrow s$, and \eqref{dg3} follows. 
\end{proof}

\begin{remark}\label{discGN} \rm A direct check shows that the $E^{(N)}(t,s)$ is a parametrix, meaning that 
\[
\left(i\hbar\partial_t+\frac{1}{2}\hbar^2\Delta-V(t,x)\right) E^{(N)}(t,s)=G^{(N)}(t,s),
\]
with
\begin{equation}\label{defgn}
G^{(N)}(t,s)f =\frac{1}{(2\pi i (t-s) \hbar)^{d/2}} \int_{\rd} e^{\tfrac{i}{\hbar}S^{(N)}(t,s,x,y)} g_N(\hbar,t,s,x,y) f(y)\, dy,
\end{equation}
where, from the construction of $S^{(N)}$ (see in particular eqs. \eqref{sn}, \eqref{PDEWk} and \eqref{Fk}), the amplitude $g_N$ satisfies
\begin{align*}
g_{N}\left(\hbar,t,s,x,y\right) & =-\frac{\partial S^{\left(N\right)}}{\partial t}-\frac{1}{2}\left|\nabla_{x}S^{\left(N\right)}\right|^{2}-V\left(t,x\right)-\frac{i\hbar d}{2\left(t-s\right)}+\frac{i\hbar}{2}\Delta_{x}S^{\left(N\right)}\\
\begin{split}
& =\sum_{k=N}^{2N}-\frac{1}{2}\left(\sum_{\substack{j+\ell=k\\
		j,\ell\ge1
	}
}\nabla_{x}W_{j}\cdot\nabla_{x}W_{\ell}\left(t-s\right)^{k}\right) \\ & + \frac{i\hbar}{2}\Delta_{x}W_{N}\left(x,y\right)\left(t-s\right)^{N}
\\ &  -\frac{\left(t-s\right)^{N}}{\left(N-1\right)!}\int_{0}^{1}\left(1-\tau\right)^{N-1}\left(\partial_{t}^{N}V \right)\left(\left(1-\tau\right)s+\tau t,x\right)d\tau .
\end{split}  
\end{align*}
Hence, by Assumption $\mathrm{(A)}$ and Proposition \ref{W sj}
\[
\left\Vert g_N\left(\hbar,t,s,\cdot,\cdot \right)\right\Vert_{M^{\infty,1}(\rdd)}\le C\left(t-s\right)^N
\]
for $0<t-s\le T$, with a constant $C=C(T)>0$ independent of $\hbar\in (0,1]$. 

Following the path of the preceding proof, by means of suitable dilations we can conveniently recast  $G^{(N)}(t,s)$ as an OIO with time and $\hbar$ independent phase and amplitude 
\[
\tilde{g}^{(N)}(\hbar,t,s,x,y)= D_{\sqrt{\hbar(t-s)}} \left[e^{\frac{i}{\hbar}R^{(N)}(t,s,x,y)}g^{(N)}(\hbar,t,s,x,y)\right]. 
\]
We have $\tilde{g}^{(N)}\in M^{\infty,1}(\rdd)$ by Lemma \ref{dilatmod} - in fact, more than this is true: 
\[
\left\Vert \tilde{g}^{(N)}(\hbar,t,s,\cdot,\cdot) \right\Vert_{M^{\infty,1}} \le C (t-s)^N
\]
for $0<t-s\le T\hbar$ and a constant $C=C(T)>0$. 
Therefore, $G^{(N)}(t,s)$ extends to a bounded operator on $L^2(\rd)$ (cf. again \cite{boulk}) and arguments similar to those of the preceding proof lead to the estimate

\begin{equation}\label{stimaGN}
\left\Vert G^{\left(N\right)}f\right\Vert _{L^{2}}\le C\left\Vert \tilde{g}^{(N)}\right\Vert _{M^{\infty,1}}\left\Vert f\right\Vert _{L^{2}} \le C(T)\left(t-s\right)^N\left\Vert f\right\Vert _{L^{2}},
\end{equation}
for $0<t-s\le T\hbar$. 

\end{remark}
The preceding discussion is the bedrock of the following result. 
\begin{theorem}\label{stimaN+1}
	For every $T>0$, there exists a constant $C=C(T)>0$ such that
	\begin{equation}\label{stimaN+1eq}
	\Vert E^{(N)}(t,s) - U(t,s) \Vert_{L^2\rightarrow L^2} \leq C\hbar^{-1}(t-s)^{N+1},
	\end{equation}
	whenever $0<t-s\leq T\hbar$. 
\end{theorem}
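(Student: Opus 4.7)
The plan is to exploit the parametrix identity derived in Remark \ref{discGN} together with Duhamel's principle. Recall that $E^{(N)}(t,s)$ satisfies
\[
i\hbar\partial_t E^{(N)}(t,s) = \Bigl(-\tfrac{1}{2}\hbar^2\Delta + V(t,x)\Bigr) E^{(N)}(t,s) + G^{(N)}(t,s),
\]
and that $E^{(N)}(s,s) = \mathrm{Id}$ in the strong $L^2$ sense by \eqref{dg3}. On the other hand $U(t,s)$ satisfies the same Schr\"odinger equation with zero forcing term and the same initial datum. Thus the difference $v(t) := E^{(N)}(t,s)f - U(t,s)f$ is formally a solution, vanishing at $t=s$, of an inhomogeneous Schr\"odinger equation with forcing $G^{(N)}(t,s)f$.

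Duhamel's formula then gives, at least formally,
\begin{equation*}
E^{(N)}(t,s) - U(t,s) \;=\; \frac{1}{i\hbar}\int_s^t U(t,\tau)\, G^{(N)}(\tau,s)\, d\tau.
\end{equation*}
Taking the $L^2\to L^2$ operator norm, using that $U(t,\tau)$ is unitary on $L^2(\rd)$ by Proposition \ref{wellpos L2}, and then invoking the bound \eqref{stimaGN} from Remark \ref{discGN}, namely $\|G^{(N)}(\tau,s)\|_{L^2\to L^2}\le C(T)(\tau-s)^N$ for $0<\tau-s\le T\hbar$, yields
\[
\| E^{(N)}(t,s) - U(t,s)\|_{L^2\to L^2}
\le \frac{1}{\hbar}\int_s^t C(T)(\tau-s)^{N}\, d\tau
= \frac{C(T)}{\hbar(N+1)}(t-s)^{N+1},
\]
which is precisely the desired estimate \eqref{stimaN+1eq}.

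The main technical obstacle is to justify rigorously the Duhamel identity, since we do not know a priori that $t \mapsto E^{(N)}(t,s)f$ is differentiable in the strong $L^2$ topology. I would handle this by first fixing $f\in\mathcal{S}(\rd)$ and working with the integral kernels: both $E^{(N)}(t,s)f$ and $U(t,s)f$ extend to smooth functions in $t$ and $x$ (for $t>s$), so the equation \eqref{defgn} and the Schr\"odinger equation for $U$ hold pointwise. Then one computes
\[
\frac{d}{d\tau}\bigl(U(t,\tau) E^{(N)}(\tau,s)f\bigr) = \frac{1}{i\hbar}\, U(t,\tau)\, G^{(N)}(\tau,s)f
\]
in a classical sense for $\tau\in(s,t)$, and integrates from $s$ to $t$, using the strong continuity at $\tau=s$ provided by \eqref{dg3} and the unitarity of $U$. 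The bound \eqref{stimaGN} together with $\|U(t,\tau)\|_{L^2\to L^2}=1$ makes the integrand norm-integrable, so the identity extends by density from $\mathcal{S}(\rd)$ to $L^2(\rd)$, and the operator-norm estimate above follows.
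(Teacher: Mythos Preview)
Your proof is correct and follows essentially the same route as the paper's: apply Duhamel's principle to the parametrix identity from Remark \ref{discGN}, then estimate the integral using the unitarity of $U(t,\tau)$ and the bound \eqref{stimaGN}. Your extra paragraph on rigorously justifying the Duhamel identity is a welcome addition, as the paper invokes Duhamel without further comment.
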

\begin{proof}
	The propagator $U(t,s)$ clearly satisfies the equation
	\[
	\left(i\hbar\partial_{t}-H\right)U(t,s)f=0
	\]
	for every $f\in L^2(\rd)$, where $H=-\left(\hbar^{2}/2\right)\Delta+V$ is the Hamiltonian operator,
	with $V$ as in Assumption $\mathrm{\left(A\right)}$. 
	On the other hand 
	\[
	\left(i\hbar\partial_{t}-H\right)E^{\left(N\right)}(t,s)f=G^{\left(N\right)}\left(t,s\right)f,
	\]
which can be rephrased in integral form (Duhamel's principle) as
	\[
	E^{\left(N\right)}\left(t,s\right)f=U(t,s)f-i\hbar^{-1}\int_{s}^{t}U(t,\tau)G^{\left(N\right)}(\tau,s)fd\tau.
	\]
	
	Therefore, given $f\in L^{2}\left(\mathbb{R}^{d}\right)$,  by \eqref{stimaGN} we have
	\begin{align*}
	\left\Vert U\left(t,s\right)f-E^{\left(N\right)}\left(t,s\right)f\right\Vert _{L^{2}} & =\left\Vert \hbar^{-1}\int_{s}^{t}U(t,\tau)G^{\left(N\right)}(\tau,s)f d\tau\right\Vert _{L^{2}}\\
	& \leq \hbar^{-1}\int_{s}^{t}\left\Vert U(t,\tau)\right\Vert_{L^2\rightarrow L^2}\left\Vert G^{\left(N\right)}(\tau,s)f\right\Vert _{L^{2}}d\tau\\
	& \le C(T)\hbar^{-1}\int_{s}^{t}\left\Vert f \right\Vert_{L^2}(t-s)^{N}d\tau\\
	& \le C'(T)\hbar^{-1}(t-s)^{N+1}\left\Vert f \right\Vert_{L^2},
	\end{align*}
	for $0<t-s\le T\hbar$.
\end{proof}

\section{An abstract result and proof of the main result (Theorem \ref{mainteo})}
We begin by presenting a convergence result for the approximate propagators in its full generality. In fact, it can be regarded as a generalization of \cite[Lemma 3.2]{fujiwara2}. We also use in the proof some ingenious tricks from that paper.\par
\begin{theorem}\label{absres} Assume that for some $\delta>0$ we have a family of operators $E^{(N)}(t,s)$ for $0<t-s\leq \delta$, and $U(t,s)$, $s,t\in\R$, bounded in $L^2(\rd)$, satisfying the following conditions:\par
$U$ enjoys the evolution property $U(t,\tau)U(\tau,s)=U(t,s)$ for every $s<\tau<t$ and for every $T>0$ there exists a constant $C_0\geq 1$ such that 
 \begin{equation}\label{chiave0}
 \|U(t,s)\|_{L^2\to L^2}\leq C_0\ {\rm for}\ 0<t-s\leq T.
 \end{equation}
Moreover, for some constant $C_1>0$ we have
  \begin{equation}\label{chiave1}
\|E^{(N)}(t,s)-U(t,s)\|_{L^2\to L^2}\leq C_1 (t-s)^{N+1}\ {\rm for}\ t-s\leq \delta.
\end{equation}
For any subdivision $\Omega:s=t_0<t_1<\ldots<t_L=t$ of the interval $[s,t]$, with $\omega(\Omega)=\sup\{t_j-t_{j-1}: j=1,\ldots,L\}<\delta$, consider therefore the composition $E^{(N)}(\Omega,t,s)$ in \eqref{zero0}.\par

Then, for every $T>0$ there exists a constant $C=C(T)>0$ such that 
\begin{equation}\label{daver}
\|E^{(N)}(\Omega,t,s)-U(t,s)\|_{L^2\to L^2}
\leq C \omega(\Omega)^{N}(t-s)\ {\rm for}\ 0<t-s\leq T.
\end{equation}
More precisely, 
$$C=C(T)=C_0^2C_1\exp\Big(C_0C_1\omega(\Omega)^{N}T\Big).$$
\end{theorem}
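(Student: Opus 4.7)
The plan is to combine a telescoping identity with a discrete Gronwall inequality. Writing $E_j := E^{(N)}(t_j,t_{j-1})$ and $U_j := U(t_j,t_{j-1})$, and denoting by $\Omega_k$ the restriction of $\Omega$ to $[s,t_k]$, I would first establish, for each $1\leq k \leq L$, the identity
\begin{equation*}
E^{(N)}(\Omega_k,t_k,s)-U(t_k,s)=\sum_{j=1}^{k} U(t_k,t_j)\bigl[E_j-U_j\bigr] E^{(N)}(\Omega_{j-1},t_{j-1},s).
\end{equation*}
This is the usual telescoping sum obtained by interpolating between $U_k\cdots U_1$ and $E_k\cdots E_1$, one factor at a time. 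Its virtue is that the evolution property of $U$ collapses the leftmost block $U_k\cdots U_{j+1}$ into the single operator $U(t_k,t_j)$, to which one can apply the uniform bound \eqref{chiave0}, while the rightmost block is nothing but $E^{(N)}(\Omega_{j-1},t_{j-1},s)$.

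Setting $r_k:=\|E^{(N)}(\Omega_k,t_k,s)-U(t_k,s)\|_{L^2\to L^2}$, and bounding each factor in the telescoping identity by means of \eqref{chiave0}, \eqref{chiave1}, and the trivial estimate
\begin{equation*}
\|E^{(N)}(\Omega_{j-1},t_{j-1},s)\|_{L^2\to L^2}\leq \|U(t_{j-1},s)\|_{L^2\to L^2}+r_{j-1}\leq C_0+r_{j-1},
\end{equation*}
together with the key inequality $(t_j-t_{j-1})^{N+1}\leq \omega(\Omega)^{N}(t_j-t_{j-1})$, I obtain the recursion
\begin{equation*}
r_k\leq C_0^2C_1\,\omega(\Omega)^{N}(t_k-s)+C_0C_1\,\omega(\Omega)^{N}\sum_{j=1}^{k}(t_j-t_{j-1})\,r_{j-1}.
\end{equation*}

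At this point the classical discrete Gronwall inequality, applied with $a_k=C_0^2C_1\omega(\Omega)^{N}(t_k-s)$ (which is non-decreasing in $k$) and weights $b_j=C_0C_1\omega(\Omega)^{N}(t_j-t_{j-1})$, yields
\begin{equation*}
r_k\leq C_0^2C_1\,\omega(\Omega)^{N}(t_k-s)\exp\!\Bigl(C_0C_1\,\omega(\Omega)^{N}\!\!\sum_{j=1}^{k}(t_j-t_{j-1})\Bigr)=C_0^2C_1\,\omega(\Omega)^{N}(t_k-s)e^{C_0C_1\omega(\Omega)^{N}(t_k-s)}.
\end{equation*}
Taking $k=L$ and using $t-s\leq T$ in the exponential gives precisely \eqref{daver} with the advertised constant.

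The main obstacle is to avoid an exponential blow-up in the number $L$ of subintervals: a naive argument would multiply $L$ factors of $\|E_j\|\leq C_0+C_1(t_j-t_{j-1})^{N+1}$, producing a useless power $C_0^{L}$. The trick, already present in \cite{fujiwara2}, is to telescope on the $U$-side so that only the single factor $\|U(t_k,t_j)\|\leq C_0$ appears in each summand; the potentially troublesome powers of $\|E^{(N)}(\Omega_{j-1},\cdot,\cdot)\|$ are then tamed not multiplicatively but additively, which is exactly the setup required by discrete Gronwall.
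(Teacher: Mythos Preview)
Your proof is correct and arrives at exactly the same constant as the paper, but the route is genuinely different. The paper expands the product $\prod_{j=1}^{L}(U_j+R_j)-\prod_{j=1}^{L}U_j$ completely into its $2^L-1$ nontrivial terms, collapses each maximal block of consecutive $U$-factors via the evolution property (so that at most $k+1$ factors of size $C_0$ appear when there are $k$ blocks of $R$'s), bounds the resulting sum by $C_0\bigl[\prod_{j=1}^{L}(1+C_0C_1(t_j-t_{j-1})^{N+1})-1\bigr]$, and then uses the elementary inequalities $1+x\leq e^x$ and $e^\tau-1\leq \tau e^\tau$. Your argument instead telescopes into only $L$ terms and closes the loop with a discrete Gronwall inequality; the recursive bound on $\|E^{(N)}(\Omega_{j-1},t_{j-1},s)\|$ is exactly what absorbs the ``hidden'' combinatorics that the paper handles by the full expansion. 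The paper's proof is purely algebraic and avoids induction; your approach is closer in spirit to Trotter--Kato type semigroup arguments and is perhaps more transparent about why no factor $C_0^L$ appears. It is pleasant that both methods reproduce the precise constant $C_0^2C_1\exp(C_0C_1\omega(\Omega)^N T)$.
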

\begin{proof}

Let
\[
R^{(N)}(t,s):=E^{(N)}(t,s)-U(t,s)
\]
so that by \eqref{chiave1},  we have
\begin{equation}\label{chiave}
\|R^{(N)}(t,s)\|\leq C_1 (t-s)^{N+1} \ {\rm for}\ 0<t-s\leq \delta.
\end{equation}
Hence we can write
\begin{multline}\label{dg1}
E^{(N)}(\Omega,t,s)-U(t,s)\\
=\big(U(t,t_{L-1})+R^{(N)}(t,t_{L-1})\big)\ldots \big(U(t_1,s)+R^{(N)}(t_1,s)\big)-U(t,s).
\end{multline}

One expands the above product and obtains a sum of ordered products of operators, where each product has the following structure: {\it from right to left} we have, say, $q_1$ factors of type $U$, $p_1$ factors of type $R^{(N)}$, $q_2$ factors of type $U$, $p_2$ factors of type $R^{(N)}$, etc., up to $q_k$ factors of type $U$, $p_k$ factors of type $R^{(N)}$, to finish with $q_{k+1}$ factors of type $U$. We can schematically write such a product as 
\[
\underbrace{U\ldots U}_{q_{k+1}}\underbrace{R^{(N)}\ldots R^{(N)}}_{p_{k}}\underbrace{U\ldots U}_{q_{k}}\ldots\ldots \underbrace{R^{(N)}\ldots R^{(N)}}_{p_{1}}\underbrace{U\ldots U}_{q_{1}}.
\]
Here $p_1,\ldots,p_k,q_1,\ldots q_k,q_{k+1}$ are non negative integers whose sum is $L$, with $p_j>0$ and we can of course group together the consecutive factors of type $U$, using the evolution property assumed for $U$. Now, for $0<t-s\leq T$ we estimate the $L^2\to L^2$ norm of the above ordered product using the known estimates for each factor, namely \eqref{chiave0} and \eqref{chiave}. In particular, by using the assumption $C_0 \geq 1$, we get
\begin{align*}
&\leq C_0^{k+1}\prod_{j=1}^k \prod_{i=1}^{p_j} C_1 (t_{J_j+i}-t_{J_{j}+i-1})^{N+1}\\
&\leq C_0\prod_{j=1}^k \prod_{i=1}^{p_j} C_0C_1 (t_{J_j+i}-t_{J_{j}+i-1})^{N+1}\
\end{align*}
where $J_j=p_1+\ldots+p_{j-1}+q_1+\ldots +q_j$ for $j\geq2$ and $J_1=q_1$. \par
The sum over $p_1,\ldots,p_k,q_1,\ldots,q_{k+1}$ of these terms is in turn 
\begin{align*}
&\leq C_0\left\{ \prod_{j=1}^L(1+C_0C_1 (t_{j}-t_{j-1})^{N+1}) -1 \right\}\\
&\leq C_0\left\{\exp\left( \sum_{j=1}^L C_0C_1 (t_{j}-t_{j-1})^{N+1}\right) -1 \right\}\\
&\leq C_0\left\{\exp\left(C_0C_1\omega(\Omega)^{N}(t-s)\right) -1\right\}\\
&\leq C_0^2C_1\omega(\Omega)^{N}(t-s)\exp\left(C_0C_1\omega(\Omega)^{N}(t-s)\right)
\end{align*}
where in the last inequality we used $e^{\tau}-1\leq \tau e^\tau$, for $\tau\geq 0$. 

This gives \eqref{daver} with $C=C(T)$ as in the statement and concludes the proof.
\end{proof}

We can now prove our main result.
\begin{proof}[Proof of Theorem \ref{mainteo}]
	The claim follows at once from Theorem \ref{stimaN+1} and Theorem \ref{absres} applied with $T$ replaced by $T\hbar$, $C_0=1$, $C_1=C\hbar^{-1}$, where $C$ is the constant appearing in \eqref{stimaN+1eq}, 
and using $t-s\le T\hbar$. 
\end{proof}

\end{document}